\definecolor{darkgreen}{rgb}{0,0.4,0}
\definecolor{darkgreen}{rgb}{0,0.6,0}
\def\qed{\hfill$\Box \Box \Box$}
\newtheorem{definition}{Definition}
\newtheorem{assumption}{Assumption}
\newtheorem{lemma}{Lemma}
\newtheorem{remark}{Remark}
\newtheorem{proposition}{Proposition}
\def\begquo{\begin{quote}}
\def\endquo{\end{quote}}
\def\begequarr{\begin{eqnarray}}
\def\endequarr{\end{eqnarray}}
\def\begequarrs{\begin{eqnarray*}}
\def\endequarrs{\end{eqnarray*}}
\def\begarr{\begin{array}}
\def\endarr{\end{array}}
\def\begequ{\begin{equation}}
\def\endequ{\end{equation}}
\def\lab{\label}
\def\begdes{\begin{description}}
\def\enddes{\end{description}}
\def\begenu{\begin{enumerate}}
\def\begite{\begin{itemize}}
\def\endite{\end{itemize}}
\def\endenu{\end{enumerate}}
\def\lef[{\left[\begin{array}}
\def\rig]{\end{array}\right]}
\def\qed{\hfill$\Box \Box \Box$}
\def\begcen{\begin{center}}
\def\endcen{\end{center}}
\def\begrem{\begin{remark}\rm}
\def\endrem{\end{remark}}
\def\begdef{\begin{definition}}
\def\enddef{\end{definition}}
\def\begpro{\begin{proposition}}
\def\endpro{\end{proposition}}
\def\begfac{\begin{fact}}
\def\endfac{\end{fact}}
\def\begass{\begin{assumption}}
\def\endass{\end{assumption}}
\def\begsubequ{\begin{subequations}}
\def\endsubequ{\end{subequations}}
\def\begmat#1{\begin{bmatrix}#1\end{bmatrix}}
\def\begali#1{\begin{align}{#1}\end{align}}
\def\cali{{\cal I}}
\def\calh{{\cal H}}
\def\caly{{\cal Y}}
\def\cals{{\cal S}}
\def\call{{\cal L}}
\def\sfh{\mathsf {H}}
\def\hatthe{\hat{\theta}}
\def\L2e{{\cal L}_{2e}}
\def\rea{\mathbb{R}}
\def\intnum{\mathbb{N}}
\def\diag{\mbox{diag}}
\def\adj{\mbox{adj}}
\def\col{\mbox{col}}
\def\hal{{1 \over 2}}
\def\diag{\mbox{diag}}
\def\rank{\mbox{rank}\;}
\def\ARC{{\it Annual Reviews in Control}}
\def\IJACSP{{\it Int. J. on Adaptive Control and Signal Processing}}
\def\TAC{{\it IEEE Trans. Automatic Control}}
\def\SCL{{\it Systems \& Control Letters}}
\def\AUT{{\it Automatica}}
\def\SIAM{{\it SIAM J. Control and Optimization}}
\def\BibTeX{{\rm B\kern-.05em{\sc i\kern-.025em b}\kern-.08em
    T\kern-.1667em\lower.7ex\hbox{E}\kern-.125emX}}
\begin{document}
\title{Conditions for Convergence of Dynamic Regressor Extension and Mixing Parameter Estimators Using LTI Filters}
\author{Bowen Yi and Romeo Ortega, \IEEEmembership{Life Fellow, IEEE}
%\thanks{This paragraph of the first footnote will contain the date on  }
%
\thanks{B. Yi is with Australian Centre for Field Robotics \& Sydney Institute for Robotics and Intelligent Systems, The University of Sydney, Sydney, NSW 2006, Australia (email: \texttt{bowen.yi@sydney.edu.au})
 }
\thanks{R. Ortega is with Departamento Acad\'{e}mico de Sistemas Digitales, ITAM, Ciudad de M\'exico, M\'{e}xico and Department of Control Systems and Informatics, ITMO University, Saint Petersburg 197101, Russia (email: \texttt{romeo.ortega@itam.mx})}}

\maketitle

\begin{abstract}
In this note we study the conditions for convergence of the recently introduced dynamic regressor extension and mixing (DREM) parameter estimator when the extended regressor is generated using LTI filters. In particular, we are interested in relating these conditions with the ones required for convergence of the classical gradient (or least squares), namely the well-known {\em persistent excitation} (PE) requirement on the original regressor vector, $\phi(t) \in \rea^q$, with $q \in \intnum$ the number of unknown parameters. Moreover, we study the case when only {\em interval excitation} (IE) is available, under which DREM, concurrent and composite learning schemes ensure global convergence, being the convergence for DREM in {\em finite time}. Regarding PE we prove, under some mild technical assumptions, that if $\phi$ is PE then the scalar regressor of DREM, $\Delta_N \in \rea$, is also PE ensuring exponential convergence. Concerning IE we prove that if $\phi$ is IE then $\Delta_N$ is also IE. All these results are established in the {\em almost sure} sense, namely proving that the set of filter parameters for which the claims do not hold is of zero measure. The main technical tool used in our proof is inspired by a study of Luenberger observers for nonautonomous nonlinear systems recently reported in the literature.
\end{abstract}

\begin{IEEEkeywords}
parameter estimation, system identification, adaptive control
\end{IEEEkeywords}

%
%%%%%%%%
\section{Introduction}
\label{sec1}
%%%%%%%%
%
We consider in the paper the problem of online estimation of the unknown parameter vector $\theta \in \rea^q$ from the linear regression equation (LRE)
\begequ
\label{lre}
 y(t) = \phi^\top(t) \theta,
\endequ
where $\phi(t) \in \rea^q$ is bounded and, for simplicity, we assume $y(t) \in \rea$. Here, we restrict ourselves to on-line recursive algorithms, which are attractive due to their robustness and direct applicability to adaptive control.

It is well-known \cite{SASBODbook} that the classical gradient estimator
$$
\dot {\hat \theta}(t)=\gamma \phi(t)[y(t)-\phi^\top (t)\hat \theta(t)],\;\gamma>0,
$$
ensures global exponential stability (GES) of the zero equilibrium of the associated linear time-varying (LTV) error equation
\begequ
\lab{errequ}
\dot {\tilde \theta}(t)=-\gamma \phi(t)\phi^\top(t)\tilde \theta(t),
\endequ
with $ {\tilde \theta}(t):=\hat \theta(t)-\theta$, if and only if the regressor $\phi$ is $(T,\delta)$-PE, that is, it satisfies
\begequ
\label{pe}
\int_{t}^{t+T} \phi(s) \phi^\top(s) ds \ge \delta I_q,
\endequ
for some $T>0,\;\delta>0$ and {\em all} $t \ge 0$---where we underscore the uniformity in time requirement. Another property of the gradient estimator that follows directly from \eqref{errequ} is monotonicity of the norm of the parameter error, that is,
\begequ
\lab{monpro}
|\tilde \theta(t_b)|\leq |\tilde \theta(t_a)|,\quad \forall t_b \geq t_a \geq 0.
\endequ

It has recently been shown \cite{BARORT,EFIBARORT,PRA17} that global asymptotic stability (GAS)---but not exponential---of the error equation \eqref{errequ} is ensured under the strictly weaker condition of {\em generalized PE}. Namely,
$$
\int_{\tau_k}^{\tau_{k+1}} \phi(s) \phi^\top(s) ds \ge \delta_k I_q
$$
where $\{\delta_k\}_{k \in \intnum}$ is a sequence of positive numbers, and  $\{\tau_k\}_{k \in \intnum}$ is a strictly increasing sequence of positive times such that $\tau_k \to \infty$ as $k \to \infty$, together with a technical assumption of the relation between $\delta_k$ and the integral of $|\phi(t)|^2$.
It is widely accepted that both, the PE and the generalized PE conditions, are extremely restrictive, a situation that has motivated the development of new estimation algorithms that relax these assumptions. The interested reader is referred to \cite{ORTNIKGER} for a recent survey of this literature and \cite{BOFSLO} for a novel interesting algorithm. The main objective of this paper is to establish the connection with the classical PE requirement and the new condition for convergence of the recently introduced DREM estimator  \cite{ARAetaltac17}.

\emph{Notation}. We use $\mathbb{C}$ to represent the complex plane, and $\mathbb{C}_{>0}$ for the open right half-plane. For a complex-valued matrix $A \in \mathbb{C}^{n\times m}$, $A^{\mathsf {H}}$ denotes the Hermitian transpose. Given a real-valued symmetric matrix $P \in \rea^{n\times n}$, $\lambda_{\tt min}(P)$ is its smallest eigenvalue. We use ${\bf 1}_{n}$ to represent an $n$-dimensional vector of ones.

%
%%%%%%%%
\section{Extended Regressor Equations and DREM}
\label{sec2}
%%%%%%%%
%
In this section we give the background material for the development of the DREM estimator.
\subsection{Extended LRE}
\lab{subsec21}
%%%%%%%%%%
%

A key component of all the new modified estimators is the construction of an {\em extended LRE} (ELRE). This idea was first reported in \cite{LIO} within the context of system identification and later used in \cite{KRE} for adaptive observers and in \cite{KREJOO} for adaptive controller designs. In both cases, the ELRE is created applying stable, linear  filters to the LRE \eqref{lre}. Namely,  we introduce a {\em linear}, bounded-input-bounded-output (BIBO), single-input $\ell$-output operator $\calh: \call_\infty \to \call_\infty^\ell$ to define the ELRE
\begequ
\lab{extlre}
Y(t) = \Phi(t) \theta,
\endequ
where $Y(t)  := \calh [y](t) \in \rea^{\ell}$ and the extended regressor matrix is defined as
\begequ
\label{yphi}
	\Phi(t)  := [\calh[\phi_1](t)~|~\ldots~|\calh[\phi_q](t)] \in \rea^{\ell \times q}.
\endequ

Applying a gradient-descent estimation to the ELRE \eqref{extlre} yields
$$
\dot{\hat\theta}(t)=\gamma \Phi^\top(t) [Y(t) - \Phi(t) \hatthe(t)],\;\gamma>0,
$$
whose corresponding parameter estimation error equation is
\begin{equation}
\label{errmoddre}
\dot{\tilde\theta}(t)=-\gamma \Phi^\top(t) \Phi(t) \tilde{\theta}(t).
\end{equation}
Notice that, in contrast to \eqref{errequ}, the  matrix $ \Phi^\top(t) \Phi(t) \in \rea^{q \times q}$ is not necessarily of rank one. This is the central property that motivates the extension of the regressor. However, it is well known \cite[Subsection 6.5.3(a.iv)]{NARANNbook} and \cite[Proposition 2]{ORTNIKGER}, that the provable stability properties of the error equation \eqref{errmoddre} are the same as the ones of \eqref{errequ}, and still require the PE condition for GES.\footnote{The only provable advantage of the new estimator is that the convergence speed can be improved increasing $\gamma$. However, as discussed in  \cite[Remark 6]{ORTNIKGER}, the interest of increasing the gain in adaptive systems is highly questionable.}

Two different ways to generate the ELRE have been studied in the literature. In \cite{LIO}, they used $\ell=q$ and the BIBO operator $\calh$ is obtained with the stable, linear {\em time-invariant} (LTI) filters
\begequ
\label{lti}
\calh_i(p) = {\lambda_i \over p + \lambda_i}, \; i \in \bar q:=\{1,\ldots,q\},
\endequ
with the differential operator $p:={d\over dt}$, $\lambda_i > 0$ and $\lambda_i \neq \lambda_j$ for all $i \neq j$. We refer in the sequel to the ELRE as Lion's (L-ELRE). Its state space realization can be written as
\begin{equation}
\label{ss_elre}
\begin{aligned}
\dot \Phi & =  -\Lambda \Phi + \Lambda \begmat{\phi^\top\\ \vdots \\ \phi^\top},
\\
\dot Y & = -\Lambda Y + \Lambda \col(y,\ldots, y),
\end{aligned}
\end{equation}
with $\Lambda = \diag(\lambda_1, \ldots, \lambda_\ell)$.

In \cite{KRE,KREJOO} they also consider $\ell=q$ and the operator $\calh$ is {\em LTV} of the form\footnote{It is clear that $\calh_i$ is an LTV operator of the form $\dot x_i=A_i x_i+b_i(t)u$ with $A_i=-\alpha$ and $b_i(t)=\phi_i$.}
\begequ
\label{ltv}
\calh_i(p,t) = {1 \over p + \alpha}\phi_i(t), \; i \in \bar q,
\endequ
with $\alpha>0$. A state space realization of Kreisselmeier's ELRE---called K-ELRE---is given by
\begequ
\label{dot_Phi}
\begin{aligned}
\dot \Phi(t) &=-\alpha \Phi(t)+\phi(t) \phi^\top(t)\\
\dot Y(t) &=-\alpha Y(t)+\phi(t) y(t).
\end{aligned}
\endequ

In recent years, ELREs have been used to ease the PE requirement in novel estimator schemes---see the recent survey in \cite{ORTNIKGER} for more details and a complete list of references. Two examples of these estimators are the concurrent \cite{CHOetal} and composite learning \cite{PANYU}. Within these methodologies, a dynamic data stack is built to discretely record online historical data, and the convergence of parameter estimation is managed monitoring the excitation over an interval. That is, the PE condition \eqref{pe} is replaced by the strictly weaker assumption that the regressor is IE, whose definition is given as follows.
\begin{definition}\rm
A bounded signal $\phi \in \rea^q$ is $(t_0,t_c,\mu)$-IE if there exist $t_0 \geq 0$ and $t_c>0$ such that
\begequ
\label{ie}
\int_{t_0}^{t_0+t_c} \phi(s) \phi^\top(s) ds \ge \mu I_q
\endequ
for some $\mu>0$.
\end{definition}

The IE condition is called ``exciting over a finite time interval'' in \cite[Definition 3.1, pp. 108]{TAO}, where it was used to analyze some convergence properties of the estimation error in the gradient algorithm.
\subsection{DREM estimator}
\lab{subsec22}
%%%%%%%%%%
%
A new estimator that has attracted a lot of attention, and has proven to be very successful to solve many theoretical and practical open problems is DREM, first proposed in \cite{ARAetaltac17} and recently reviewed in \cite{ORTetaltac20}. The main idea of DREM is to generate, out of the ELRE \eqref{extlre}, $q$ {\em scalar} LREs. Towards this end, we also fix $\ell = q$ and then introduce the key {\em mixing} step of multiplying from the left \eqref{extlre} by the adjugate of the (square) matrix $\Phi(t)$,  denoted $\adj\{\Phi(t)\}$, to get
\begin{equation}
\label{Yi}
\caly_i(t)=\Delta(t) \theta_i,\;i \in \bar q,
\end{equation}
where $\caly(t) :=\adj\{\Phi(t)\}Y(t)$ and
\begali{
\Delta(t) &:=\det\{\Phi(t)\}.
\label{del}
}
The DREM design is completed with the $q$ {\em scalar} estimators
\begequ
\label{estimator}
\dot {\hat \theta}_i(t)=\gamma_i \Delta(t)[\caly_i(t)-\Delta(t)\hat \theta_i(t)],\;\gamma_i>0, \; i \in \bar q,
\endequ
with associated error equations
\begequ
\lab{errequdrem}
\dot {\tilde \theta}_i(t)=-\gamma_i \Delta^2(t)\tilde \theta_i(t), \; i \in \bar q,
\endequ
for which the following proposition can be easily proved \cite{ARAetaltac17,ORTetaltac20}.

\begin{proposition}\rm
\lab{pro1}
The systems \eqref{errequdrem} enjoy the following feature.
\begenu
\item[\textbf{F1}] The origin is GAS $\;\Longleftrightarrow \;\Delta \notin \call_2$.
\item[\textbf{F2}] The origin is GES $\;\Longleftrightarrow\; \Delta$ is PE.
\item[\textbf{F3}] For all $t_b \geq t_a \geq 0$ we have $|\tilde \theta_i(t_b)| \leq |\tilde \theta_i(t_a)|,\;i \in \bar q$.
\endenu
\end{proposition}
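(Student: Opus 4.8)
The plan is to use the fact that each of the $q$ error equations in \eqref{errequdrem} is a \emph{scalar} linear time-varying system, so that its flow is available in closed form: integrating \eqref{errequdrem} gives $\tilde\theta_i(t)=\exp\big(-\gamma_i\int_{t_a}^{t}\Delta^2(s)\,ds\big)\,\tilde\theta_i(t_a)$ for all $t\ge t_a\ge 0$ and $i\in\bar q$. All three claims can then be read off this single expression, so the proof is essentially a bookkeeping exercise once the formula is in place.

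First I would dispatch \textbf{F3}: since $\Delta^2(s)\ge 0$, the exponential factor above is no larger than one whenever $t\ge t_a$, and this is exactly $|\tilde\theta_i(t_b)|\le|\tilde\theta_i(t_a)|$ for $t_b\ge t_a$. Next, for \textbf{F1}, I would set $t_a=0$ and note that $\tilde\theta_i(t)\to 0$ for every initial condition if and only if $\int_0^t\Delta^2(s)\,ds\to\infty$, i.e. if and only if $\Delta\notin\call_2$; since \textbf{F3} already supplies (uniform) Lyapunov stability — the error norm never increases — GAS, being the conjunction of stability and global attractivity, is equivalent to $\Delta\notin\call_2$. The only point worth a word here is that attractivity and Lyapunov stability are logically separate ingredients of GAS, the second being free from \textbf{F3}.

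For \textbf{F2} I would argue the two implications from the same formula. If $\Delta$ is $(T,\delta)$-PE, summing the defining inequality over the consecutive windows $[kT,(k+1)T)$ shows $\int_0^t\Delta^2(s)\,ds\ge\lfloor t/T\rfloor\,\delta\ge(t/T-1)\delta$, which turns the closed-form solution into $|\tilde\theta_i(t)|\le e^{\gamma_i\delta}e^{-(\gamma_i\delta/T)t}|\tilde\theta_i(0)|$, i.e. GES. Conversely, writing GES through its transition map as $\exp\big(-\gamma_i\int_{t_a}^{t_b}\Delta^2(s)\,ds\big)\le m\,e^{-\rho(t_b-t_a)}$ for all $t_b\ge t_a\ge 0$ and taking logarithms gives $\gamma_i\int_{t_a}^{t_b}\Delta^2(s)\,ds\ge\rho(t_b-t_a)-\ln m$; choosing any fixed $T$ with $\rho T-\ln m>0$ and setting $t_b=t_a+T$ yields $\int_{t_a}^{t_a+T}\Delta^2(s)\,ds\ge\delta$ with $\delta:=(\rho T-\ln m)/\gamma_i>0$, so $\Delta$ is $(T,\delta)$-PE.

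I do not anticipate a genuine obstacle: the whole argument reduces to manipulating a single scalar integral. The one subtle step is the direction from GES to PE, where the excitation window $T$ must be taken large enough to absorb the overshoot constant $m$ in the exponential estimate; relatedly, one should fix the convention that GES of a time-varying system is understood uniformly in the initial time, which is what legitimises the transition-map characterisation used above.
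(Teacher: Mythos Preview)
Your argument is correct: integrating the scalar error equations to get the closed-form flow and then reading off \textbf{F1}--\textbf{F3} is exactly the intended elementary route. The paper itself does not spell out a proof of this proposition, simply noting that it ``can be easily proved'' and citing \cite{ARAetaltac17,ORTetaltac20}; your write-up is essentially what those references do, and the one delicate point you flag---that the GES$\Rightarrow$PE direction requires the uniform-in-initial-time interpretation of GES and a window $T$ large enough to absorb the overshoot constant---is handled correctly.
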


Moreover, a variation of DREM that converges in {\em finite time} under the weaker IE assumption \eqref{ie} has been recently reported in \cite{ORTetalaut20} and, as discussed in  \cite{ORTetaltac20}, it has proven instrumental to solve many practical and theoretical open problems.

Note that in DREM we have replaced the convergence condition on the vector $\phi$ being PE by a condition---either non-square integrability for GAS or PE for GES---on a scalar quantity $\Delta(t)$, which is the determinant of the extended regressor matrix \eqref{del}. A natural question that arises is the relationship between the excitation properties of the original regressor $\phi(t)$ and the new scalar regressor $\Delta(t)$. This question has been recently answered in \cite{ARAetaltac20} for the case of K-ELRE where the following results are proven.
\begin{proposition}\rm
Consider the extended regressor matrix \eqref{yphi} generated via  \eqref{ltv}, and its determinant \eqref{del}.
\begenu
\item[\textbf{C1}] $\phi$ is PE  $\;\Longleftrightarrow \;\Delta$ is PE.
\item[\textbf{C2}] $\phi$ is $(t_0,t_c,\mu)$-IE  $\;\Longrightarrow \;\Delta$ is also $(t_0,t_c,\mu)$-IE.
\endenu
\end{proposition}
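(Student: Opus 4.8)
The plan is to work entirely with the state-space realization of the K-ELRE filter \eqref{ltv}, namely $\dot\Phi(t)=-\alpha\Phi(t)+\phi(t)\phi^\top(t)$, whose solution (with $\Phi(0)=0$; the general case adds only a vanishing exponential term) is $\Phi(t)=\int_0^t e^{-\alpha(t-s)}\phi(s)\phi^\top(s)\,ds\succeq 0$, and to hang the whole argument on one structural identity: $\frac{d}{dt}\big(e^{\alpha t}\Phi(t)\big)=e^{\alpha t}\phi(t)\phi^\top(t)\succeq 0$. Hence $e^{\alpha t}\Phi(t)$ is nondecreasing in the positive-semidefinite order, which yields the \emph{quasi-monotonicity} estimate $\Phi(t_2)\succeq e^{-\alpha(t_2-t_1)}\Phi(t_1)$ for all $t_2\ge t_1\ge 0$, and therefore $\lambda_{\min}(\Phi(t_2))\ge e^{-\alpha(t_2-t_1)}\lambda_{\min}(\Phi(t_1))$. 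Since $\phi$ is bounded, say $|\phi(t)|\le M$, one also records $\Phi(t)\preceq (M^2/\alpha)I=:BI$, so the determinant \eqref{del} satisfies $0\le\Delta(t)\le B^{q}$ and, crucially, $\Delta(t)=\det\Phi(t)\le B^{q-1}\lambda_{\min}(\Phi(t))$; moreover $\Delta$ is continuous because $\Phi$ is absolutely continuous.

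For \textbf{R1}, the implication ``$\phi$ PE $\Rightarrow$ $\Delta$ PE'' is direct: if $\phi$ is $(T,\delta)$-PE then, discarding the positive-semidefinite tail $\int_0^{t-T}$, $\Phi(t)\succeq e^{-\alpha T}\int_{t-T}^{t}\phi\phi^\top\,ds\succeq e^{-\alpha T}\delta I$ for every $t\ge T$, hence $\Delta(t)\ge(e^{-\alpha T}\delta)^q$ there, and a scalar signal that is bounded and bounded away from zero for $t\ge T$ is PE. The converse is the substantive part. From $\Delta$ being $(T,\delta)$-PE, every window $[t-T,t]$ contains an instant $s$ at which $\Delta(s)\ge\sqrt{\delta/T}$, so $\lambda_{\min}(\Phi(s))\ge\sqrt{\delta/T}/B^{q-1}$; propagating this forward with the quasi-monotonicity estimate over the gap of length at most $T$ gives $\lambda_{\min}(\Phi(t))\ge\kappa:=e^{-\alpha T}\sqrt{\delta/T}/B^{q-1}>0$ for all $t\ge T$. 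To close the loop I would convert a uniform lower bound on $\lambda_{\min}(\Phi(\cdot))$ back into PE of $\phi$: split $\Phi(t)$ as $\int_0^{t-T'}e^{-\alpha(t-s)}\phi\phi^\top\,ds+\int_{t-T'}^{t}e^{-\alpha(t-s)}\phi\phi^\top\,ds$, bound the first term by $e^{-\alpha T'}\Phi(t-T')\preceq e^{-\alpha T'}BI$ and the second by $\int_{t-T'}^{t}\phi\phi^\top\,ds$, so that $\int_{t-T'}^{t}\phi\phi^\top\,ds\succeq(\kappa-e^{-\alpha T'}B)I$; choosing $T'$ so large that $e^{-\alpha T'}B\le\kappa/2$ makes $\phi$ $(T',\kappa/2)$-PE.

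For \textbf{R2} the argument is a localized version of the easy half of R1. If $\phi$ is $(t_0,t_c,\mu)$-IE in the sense of \eqref{ie}, then evaluating $\Phi$ at the right endpoint of the excitation window gives $\Phi(t_0+t_c)\succeq e^{-\alpha t_c}\int_{t_0}^{t_0+t_c}\phi\phi^\top\,ds\succeq e^{-\alpha t_c}\mu I$; propagating forward over, say, a unit-length interval by quasi-monotonicity yields $\lambda_{\min}(\Phi(t))\ge e^{-\alpha(t_c+1)}\mu=:c>0$ for $t\in[t_0+t_c,\,t_0+t_c+1]$, whence $\Delta(t)\ge c^{q}$ on that interval and $\int_{t_0+t_c}^{t_0+t_c+1}\Delta^2(s)\,ds\ge c^{2q}$, i.e. $\Delta$ is IE over this (translated, constant-adjusted) interval.

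I expect the converse of R1 to be the main obstacle. A persistently exciting \emph{scalar} signal may dip arbitrarily close to zero, so one cannot directly claim that $\Delta(t)$---let alone $\lambda_{\min}(\Phi(t))$---is bounded away from zero pointwise; it is exactly the quasi-monotonicity of $e^{\alpha t}\Phi(t)$ that rescues the argument, since once $\lambda_{\min}(\Phi)$ is large at a single instant it cannot decay faster than $e^{-\alpha t}$, so the ``large-$\Delta$'' instants delivered by PE in every window are enough to sustain a uniform lower bound on $\lambda_{\min}(\Phi)$. Two minor points also need care: regularity (boundedness and measurability of $\phi$ suffice to make $\Phi$ absolutely continuous and $\Delta$ continuous, legitimizing the ``there exists an instant $s$'' steps), and the unavoidable change of the excitation constants $(T,\delta)$ in R1 and $(t_0,t_c,\mu)$ in R2 when passing from $\phi$ to $\Delta$.
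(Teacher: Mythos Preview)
The paper does not actually prove this proposition: it is stated as a result quoted from \cite{ARAetaltac20} (Aranovskiy, Ushirobira, Korotina, Vedyakov), and the paper's own technical contribution is Proposition~\ref{pro3} for the L-ELRE case. So there is no in-paper proof to compare against.

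That said, your argument is correct and self-contained. The central observation that $t\mapsto e^{\alpha t}\Phi(t)$ is nondecreasing in the Loewner order, yielding $\Phi(t_2)\succeq e^{-\alpha(t_2-t_1)}\Phi(t_1)$, is exactly the right structural tool for this filter, and you use it cleanly in both directions of \textbf{R1}. The forward implication is standard; for the converse you correctly extract from scalar PE of $\Delta$ an instant in every window where $\Delta$ is uniformly bounded below, translate this via $\Delta\le B^{q-1}\lambda_{\min}(\Phi)$ into a pointwise lower bound on $\lambda_{\min}(\Phi)$, propagate it forward with quasi-monotonicity, and then split $\Phi(t)$ into an exponentially small tail plus a recent window to recover PE of $\phi$. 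Each step is sound. For \textbf{R2} the localized version works as written.

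Two minor remarks. First, your bounds are established only for $t\ge T$ (respectively $t\ge\max\{T,T'\}$), but as you implicitly realize, a signal that is PE for $t\ge t_*$ is PE on $[0,\infty)$ after enlarging the window by $t_*$; you might state this explicitly. Second, as you yourself flag, the constants in both \textbf{R1} and \textbf{R2} necessarily change when passing from $\phi$ to $\Delta$; the proposition's phrasing ``$\Delta(t)$ is also $(t_0,t_c,\mu)$-IE'' should be read as ``$\Delta$ is IE'' rather than ``IE with the same triple,'' and your reading is the correct one.
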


The results above prove that, in a scenario with suitable excitation, DREM with K-ELRE has the same convergence properties as the standard gradient, with the following additional advantages:
\begenu[(i)]
\item GAS under the non-square integrability condition of the scalar regressor $\Delta$ that, as shown in \cite[Proposition 3]{ORTetaltac20} is {\em strictly weaker} than PE of the regressor $\phi$;
\item element-by-element monotonicity of the parameter errors, which is {\em strictly stronger} than \eqref{monpro};
\item ability to tune, via $\gamma_i$, the convergence rate of {\em each} parameter error, in an independent way;
\item possibility to ensure {\em finite convergence time} under IE \cite[Proposition 3]{ORTetalaut20} without the injection of high-gain.
\endenu

The {\em main objective} of this paper is to prove a similar result for DREM with L-ELRE.\footnote{In \cite{ARAetalijacsp18} this question was studied for the particular case of systems identification, when the regressor is generated via LTI filtering of a sum of sinusoidal signals.}  In this way we conclusively establish the superiority of DREM---in either one of its forms, K- or L-ELRE---over classical estimators. Instrumental to establish our results is to adopt a  Kazantzis-Kravaris-Luenberger (KKL) observer perspective of the DREM estimator, as done in \cite{ORTetalaut18}. In this way, we can invoke a fundamental result on injectivity of the key mapping of KKL observers for {\em nonautonomous} nonlinear systems recently reported in \cite{BERAND}. This result extends to the nonautonomous case the previous results of \cite{ANDPRA} for autonomous systems, allowing then to include the study at hand.
%
%%%%%%%%%%%%%%%%%%
\section{Two Minor Modifications to the DREM Estimator}
\lab{sec3}
%%%%%%%%
%
To establish our results we introduce two slight modifications to the procedure described above. First, we do not select the number of filters {\em equal} to the dimension of the parameter vector, instead we set\footnote{This choice is made for simplicity, the results being true for any $\ell > q$.}
\begequ
\lab{ellq}
\ell = q+1.
\endequ
Moreover, in order to use the result of \cite{COR} in the proof of our main claim, we allow $\lambda_i$ to be in $\mathbb{C}_{>0}$. This modification is similar to the procedure used in the design of KKL observers, first proposed in \cite{KAZKRA} and intensively studied in \cite{ANDPRA,BERbook} where---to ensure injectivity of a key mapping---the number of LTI filters is selected {\em larger} than the dimension of the systems state. Such an approach was suggested in \cite[Section 5]{ORTetalaut18}, where the DREM estimator is revisited as a KKL observer for the LTV system
\begali{
\nonumber
\dot{\theta}& = 0\\
\label{ltvkkl}
y(t) &= \phi^\top(t) \theta,
}
in which the ``state" $\theta$ is constant and the ``output matrix" is $\phi^\top (t)$. As pointed out in \cite{ORTetalaut18} the results on KKL observers for {\em nonautonomous} systems,  known at that time, were insufficient to carry out the analysis of the DREM estimators. Fortunately, this situation has evolved, and the issue has been fully addressed in \cite{BERAND}. In particular, we will show in the paper that \cite[Theorem 3]{BERAND} can be easily adapted to answer our questions. For ease of reference, a simplified version of this fundamental result is given in the Appendix.

Our second modification is a consequence of the choice of \eqref{ellq}. Indeed, in order to apply the mixing step of DREM described in Subsection \ref{subsec22}, it is obviously necessary to have a {\em square} extended regressor matrix. This is easily achieved premultiplying the ELRE \eqref{extlre} by $\Phi^\sfh(t)$ and {\em redefining} the LREs \eqref{Yi} as
$$
\caly_N(t)=\Delta_N(t) \theta,
$$
in which $\caly_N(t) := \adj\{\Phi^\sfh(t) \Phi(t)\} \Phi^\sfh(t) Y(t)$ and
\begequ
\label{deln}
\Delta_N(t) := \det\{\Phi^\sfh(t) \Phi(t)\} \in \rea,
\endequ
where the subscript $(\cdot)_N$ is added to underscore that these are new signals. It is clear that we can still follow the last step in DREM to estimate the parameters $\theta_i$ individually. One important observation is that, since $\Phi(t)$ is a {\em tall} matrix the following implication is true
\begequ
\lab{ranphi}
\rank\{\Phi(t)\}=q 
\quad \Longrightarrow
\quad
\rank\{\Phi^\sfh(t)\Phi (t)\}=q.
\endequ
This implication allows us to study the properties of the ``new" $\Delta_N(t)$ defined in \eqref{deln} via the analysis of the rank of $\Phi(t)$.

A final observation is that, without loss of generality for the purposes of this note, we assume that all filters initial conditions are zero.

\begin{remark}
The two modifications proposed in this section are minor from the constructive perspective. However, we consider the case $\ell > q$, since it was shown in \cite{BERAND,ANDPRA} that the excessive coordinates play an important role to guarantee the key mapping, discussed in the following section, being injective. In contrast, there is no provable guarantee for the case $\ell =q$.
\end{remark}
%
%%%%%%%%%%%%%%%%%%%%%%
\section{Main Result}
\label{sec4}
%%%%%%%%%%%%%%%%%%%%
%
We are now in position to present the main result of this note.

\begin{proposition}
\label{pro3}\rm
Consider the extended regressor matrix \eqref{yphi} generated via \eqref{lti}, with $\lambda_i \in \mathbb{C}_{>0}$, $\lambda_i \neq \lambda_j$ for $i \neq j$ and $\ell = q+1$, and the determinant \eqref{deln}. For almost all choices of $\lambda_i$ the following implications hold true.\footnote{The qualifier ``almost" stems from the fact that the set of $\lambda_i$ for which the implications do not hold has {\em zero Lebesgue measure}.}
\begin{itemize}
    \item[\textbf{P1}] If $\phi$ is $(t_0,t_c,\mu)$-IE, then there exists a moment $t_\star>0$ such that
    \begequ
    \lab{delp1}
    \Delta_N(t) > 0,\quad \forall t\in [t_\star, \infty).
    \endequ
    %
  %  \vspace{0.1cm}

    \item[\textbf{P2}] If $\phi$ is $(T,\delta)$-PE and smooth with its time derivatives bounded, and if for any interval $[t,t+T)$, $\forall t\ge 0$ there exists a sub-interval $[t_c, t_c + \delta_c] \subset [t, t+T)$ such that 
    \begequ
    \label{Lh}
    \begin{aligned}
    \lambda_{\tt min }\left( \phi^\top_H(s) \phi_H(s) \right) \ge  L_H^{-1}, \quad \forall s \in [t_c, t_c + \delta_c]
    \end{aligned}
    \endequ
    with 
    $
    \phi_H:= \begmat{\phi(s)& \dot \phi(s) & \ldots & \phi^{(q-1)}(s)},
    $
    for some $\delta_c>0$ and some $L_H>0$. Then, $\Delta_N$ is PE.
\end{itemize}
\end{proposition}
\begin{proof}
The proof of {\bf P1} is based on \cite[Theorem 3]{BERAND}, which relies on the deep and fundamental result on almost sure injectivity of mappings reported in \cite[Lemma 3.2]{COR}. We present in the Appendix a simplified version of the result of \cite{BERAND} that is suitable for our analysis.

The gist of the proof is to verify, for our DREM scenario, the conditions of Lemma \ref{lem1}, and it proceeds along the following basic steps. First, we identify the mapping \eqref{T} of Lemma \ref{lem1} for the LTV system \eqref{ltvkkl}. Second, we prove that, if the extended regressor matrix $\Phi(t)$ is full-rank, the mapping is injective selecting the filter parameters outside a zero-Lebesgue measure set. Third, we prove that the assumption of $\phi$ being IE guarantees the backward-distinguishability condition in Lemma \ref{lem1}. The proof is completed invoking the implication \eqref{ranphi}. For convenience and with a slight abuse of notations, the output of the system \eqref{ltvkkl}, parameterized by $\theta$, is rewritten as $y_\theta(t)$.

Some simple calculations show that, for the system \eqref{ltvkkl}, the mapping \eqref{T} takes the form
$$
\begin{aligned}
T(\theta,t) & =  \mathop{\mathlarger{\mathlarger{\mathlarger{\int_{0}^{t}}}}}  \begmat{e^{\lambda_1(s-t)} y_\theta(s) \\ \vdots \\ e^{\lambda_\ell(s-t) }y_\theta(s)} ds\\
& =  \mathop{\mathlarger{\mathlarger{\mathlarger{\int_{0}^{t}}}}}  \begmat{e^{\lambda_1(s-t)} \phi^\top(s) \\ \vdots \\ e^{\lambda_\ell(s-t) }\phi^\top(s)} \theta ds\\
& = \Lambda^{-1}
\begmat{ \displaystyle \int_0^t \lambda_1 e^{\lambda_1(s-t)} \phi^\top(s) ds\\ \vdots \\ \displaystyle \int_0^t    \lambda_\ell e^{\lambda_\ell(s-t) }\phi^\top(s)ds}
\theta \\
& = \Lambda^{-1} \Phi(t) \theta,
\end{aligned}
$$
where $\Lambda:=\diag\{\lambda_i\} \in \mathbb{C}^{\ell \times \ell}$. Clearly, according to Definition \ref{def:inj}, for a given moment $t$, injectivity holds if and only if $\Phi(t)$ is full rank, {\em i.e.}
\begequ
\lab{injiffran}
T(\cdot,t)\;\mbox{ is\;injective}\quad \Longleftrightarrow \quad \rank\{\Phi(t)\}=q.
\endequ

We proceed now to prove that, under the condition of $\phi$ being $(t_0,t_c,\mu)$-IE, the backward-distinguishability condition of Lemma \ref{lem1} is satisfied. First, select $t_\star := t_0 + t_c$, then for $t\ge t_\star$, we have
\begin{equation}
\label{int_0t}
\begin{aligned}
\int_{0}^t \phi(s)\phi^\top(s) ds &  = \int_{0}^{t_0} \phi(s)\phi^\top(s) ds+ \int_{t_0}^{t_\star} \phi(s)\phi^\top(s) ds \\
& \quad +
\int_{t_\star}^{t} \phi(s)\phi^\top(s) ds \\
& \ge \int_{t_0}^{t_\star} \phi(s)\phi^\top(s) ds\\
& \ge \mu I_q.
\end{aligned}
\end{equation}
Consider two parameter vectors $(\theta_a,\theta_b) \in \rea^{2q}$. Then, we have
\begequ
\label{key_implication}
\begin{aligned}
& y_{\theta_a}(s) - y_{\theta_b}(s)  \equiv 0, \; \forall s \in [0,t] \quad \\
&\qquad \qquad \qquad  \Longleftrightarrow \quad
\phi^\top(s) (\theta_a -\theta_b) \equiv 0, \; \forall s \in[0,t]\\
& \qquad \qquad \qquad \Longrightarrow \quad
\phi(s)\phi^\top(s) (\theta_a -\theta_b) \equiv 0, \; \forall s \in[0,t]\\
& \qquad \qquad\qquad  \Longrightarrow \quad
\int_0^t\phi(s)\phi^\top(s) ds (\theta_a-\theta_b) =0\\
& \qquad \qquad \qquad \overset{\eqref{int_0t}}{\Longrightarrow} \quad \theta_a =\theta_b.
\end{aligned}
\endequ

We now show that the $(t_0,t_c,\mu)$-IE condition implies backward-distinguishability in $t_u$ (to be defined) during a compact interval. Without loss of generality, we assume $t_0>0$.\footnote{For the case $t_0 =0$, if the $\phi(t)$ is continuous, we can always find a sufficiently small $\Delta t>0$ such that the given signal is $(\Delta t, t_c,\mu)$-IE.} Define a parameter $t_u := \hal t_0 + t_c$, and we consider the compact interval 
$$
\cali_1:= \left\{t : t_0 + t_c \le t \le {3\over 2}t_0 + t_c \right\}.
$$
It is easy to show
$$
[t_0, t_0+ t_c] \subseteq [t- t_u, t] \quad \forall t \in \cali_1.
$$
Hence, following the same procedure in \eqref{key_implication}, we have for any $t\in \cali_1$
$$
\begin{aligned}
& y_{\theta_a}(s) - y_{\theta_b}(s)  \equiv 0, \; \forall s \in [t-t_u, t] \\
& \qquad \qquad \hspace{3cm} \qquad {\Longrightarrow} \quad \theta_a =\theta_b.
\end{aligned}
$$
The above implication shows that the LTV system \eqref{ltvkkl} is backward-distinguishable in $t_u$ for $t\in \cali_1$. Then invoking Lemma \ref{lem1}, we have that $T(\cdot,t)$ is injective in the bounded interval $\cali_1$. From the above, we have that the IE condition implies the injectivity at least in a small interval, \emph{i.e.},
\begin{equation}
\label{implication2}
\begin{aligned}
\phi \mbox{~is~} (t_0,t_c,\mu)\mbox{-IE} ~\implies~
\mbox{Injectivity of~} T(\cdot,t),\\t \in [t_0+t_c, {3\over 2}t_0 + t_c]
\end{aligned}
\end{equation}
for $t_0>0$. On the other hand, it is widely known that if $\phi$ is $(t_0,t_c,\mu)$-IE, then it is also $(t_0,t_c + T_c,\mu)$-IE for any $T_c\ge 0$.\footnote{If the IE condition is parameterized by different $T_c\ge 0$, the parameter $t_u$ in backward-distinguishability will be different as well.} Therefore, $T(\cdot,t)$ is injective for any finite-time moment $t\ge t_\star = t_0+t_c$ by considering different values of $T_c$. The proof of \eqref{delp1} is completed recalling the equivalence \eqref{injiffran} and the implication \eqref{ranphi}.

We now give the proof of {\bf P2}, which is mainly based on \cite[Theorem 2]{BERAND}. That result is about the injectivity of $T(\cdot,t)$ for the nonlinear time-varying system \eqref{nlsys} with {strong differential observability}---known as the foundation of high-gain observers---which can be used to impose uniform injectivity. Condition \eqref{Lh} guarantees that $\phi_H(t)$ is full rank in the intervals $[t_c, t_c+ \delta_c]$. As a result, the LTV system \eqref{ltvkkl} is strongly differentially observable in $[t_c, t_c+ \delta_c]$.

Since all the time-varying signals involved and their time derivatives are bounded, it is straightforward to verify Items 1)-2) and 4) in \cite[Assumption 3]{BERAND}, which is the key assumption used in \cite[Theorem 2]{BERAND}, for the system \eqref{ltvkkl}, in which the output function is given by 
$$
h(\theta,t)= \phi^\top(t)\theta.
$$
Together with \eqref{Lh}, Item 3) in \cite[Assumption 3]{BERAND} also holds in these intervals $[t_c,t_c+\delta_c]$.

Comparing to the sufficient conditions used in \cite[Theorem 2]{BERAND}, the only part that remains to be shown is the controllability of the pair $(\Lambda, B)$ with $B= {\bf 1}_{q+1}$. The associated controllability matrix is given by
$$
\begin{aligned}
\begmat{B & \Lambda B & \ldots & \Lambda^q B}
= 
\begmat{1 & \lambda_1 & \ldots & \lambda_1^q
\\
1 & \lambda_2 & \ldots & \lambda_2^q
\\
\vdots & \vdots& \ddots &\vdots
\\
1 & \lambda_{q+1} & \ldots & \lambda_{q+1}^q
},
\end{aligned}
$$
which is a Vandermonde matrix, thus being full rank due to the assumption on distinct $\lambda_i$. Hence, the pair $(\Lambda, B)$ is controllable, then verifying all the assumptions required in \cite[Theorem 2]{BERAND}.

According to that theorem there exists a constant $L_k>0$ such that the mapping $T(\cdot,t)$ defined in \eqref{T} is an injective immersion with
$$
|\theta_a - \theta_b| \le L_k |T(\theta_a,t) - T(\theta_b,t)|,\quad \forall t \in [t_c + t_k, t_c+\delta_c]
$$
for any $(\theta_a,\theta_b) \in \rea^{2q}$ and some $0<t_k < \delta_c$ by properly selecting $L_H>0$. (In the proof of \cite[Theorem 2]{BERAND}, it was shown that $t_k$ is decreasing with smaller $L_H$.)

Then, in the above interval, we have
$$
|\Lambda^{-1}\Phi(t)(\theta_a - \theta_b)| \ge L_k^{-1} |\theta_a - \theta_b|
$$
for any $(\theta_a,\theta_b) \in \rea^{2q}$. Defining $v:=\theta_a - \theta_b$, we get
\begequ
\label{vv}
v^\top \Phi^\sfh(t) \Phi(t) v \ge L_0  v^\top v
\endequ
for some $L_0>0$. Since \eqref{vv} holds for any $v \in \rea^q$, we have
$$
\Phi^{\sfh}(t) \Phi(t) \ge L_0  I_{n \times n}.
$$
Noting $\Delta_N = \det(\Phi^\sfh \Phi)$, we conclude that $\Delta_N$ has a uniformly positive lower bound in these intervals, {\em i.e.}, $\forall t\ge 0$
$$
\Delta_N(s) \ge L_1, ~ \forall s\in [t_c + t_k , t_c + \delta_c] \subset [t,t+T),
$$
for some $L_1>0$ independent of $t$. Invoking the uniformity with respect to time in \eqref{Lh}, and the definition of PE, we complete the proof.
\end{proof}

The following remarks are in order.

\begin{remark}
It is important to note that the claim {\bf P1} proves that if $\phi$ is IE then $\Delta_N(t)$ is bounded away from zero for all {\em finite} times $ t$, which {\em does not} ensure that the necessary and sufficient condition for convergence, {\em i.e.}, $\Delta_N \notin \call_2$, is satisfied. On the other hand, if $\phi$ is PE and satisfies the condition \eqref{Lh}, using successively {\bf P2} and {\bf F2} of Proposition 1, it ensures that the DREM estimator is GES. These facts are illustrated in simulations of the section below.
\end{remark}

\begin{remark}
If $\phi\in $ PE, it is possible to estimate the unknown parameter vector $\theta$ off-line from the L-ELRE or the scalar mixed regressors. Indeed, as done in KKL observer \cite{BERAND},  we can replace the online gradient-descent search by the calculation of the inverse mappings, that is,
$$
\hat \theta(t) = [\Phi^\sfh(t) \Phi(t)]^{-1}\Phi^\sfh(t)Y(t)
$$
for the former and the simpler $\hat \theta(t) = {1\over \Delta_N(t)}\caly_N(t)$ for the latter. However, such an implementation is fragile in the presence of noise and stymies the ability to track (slowly) time-varying parameters.
\end{remark}

%%%%%%%%%

\begin{remark}
The key point in {\bf P1} is to establish the connection between an interval property of $\phi$ and the ``point-wise'' property of the scalar variable $\Delta_N$ after some moment of time. To give some insight on this fact we recall that we can apply---as done in \cite[Proposition 7]{ORTNIKGER} for KRE---Jacobi's formula to \eqref{ss_elre} (with $\ell=q$ and $\Delta$ defined in \eqref{del}) to get
$$
\dot \Delta  =  -\lambda_\Sigma \Delta + {\bf v}_\lambda^\top \adj\{\Phi\}\phi
$$
with $\lambda_\Sigma = \sum_{i}^{\ell} \lambda_i$ and ${\bf v}_\lambda = \col(\lambda_1, \ldots, \lambda_\ell)$. Then, solving the equation above, we have
$$
\Delta(t) = \Delta(t_0)e^{-\lambda_\Sigma (t-t_0)} + \int_{t_0}^t e^{\lambda_\Sigma(s-t)} {\bf v}_\lambda^\top \adj\{\Phi(s)\}\phi(s) ds.
$$
Clearly, the ``point-wise'' function $\Delta(t)$ gathers the interval information of $\phi$ in the past. 
\end{remark}

\begin{remark}
The condition {\bf P1}, does not guarantee that the scalar regressor $\Delta_N$ is PE, only that it is IE. Hence, using the standard gradient descent we cannot ensure that the estimation error converges to zero. This problem has been recently overcome in  \cite{BOBetal}, where with some additional filtering operations we generate from LRE with IE regressors, new LRE with PE regressors.
\end{remark}

%
%%%%%%%%
\section{Simulation Results}
\label{sec4}
%%%%%%%%
%
To illustrate the results of Proposition \ref{pro3}, in this section we present some numerical simulations for the LRE \eqref{lre} with $q=2$ and $\theta = \col(-1,2)$. The DREM estimator with L-ELRE is designed selecting $\ell=3$, the filter parameters $\lambda_i$ equal to $0.2, ~0.3$ and $0.4$ and both adaptation gains set as $\gamma_i=2$. All initial conditions---of the filters and the estimated parameters---are set equal to zero.

First, we consider the PE case, selecting the regressor $\phi(t) = \col(5\sin(t), 8\cos(t))$. It is very easy to check that the given signal $\phi(t)$ satisfies the full rank condition \eqref{Lh}. Fig. \ref{fig:1} shows the evolution of $\Delta_N(t)$, which is positive for all $ t\in [t_\star,\infty)$, with $t_\star=2\pi$. The lower bound of $\Delta_N(t)$ is uniformly larger than zero even for $t\to\infty$, verifying the statement {\bf P2}. We also show in Fig. \ref{fig:2} the trajectories of the DREM estimation errors $\tilde{\theta}_i$, clearly revealing the element-by-element exponential convergence.

\begin{figure}[h]
    \centering
    \includegraphics[width=0.5\textwidth]{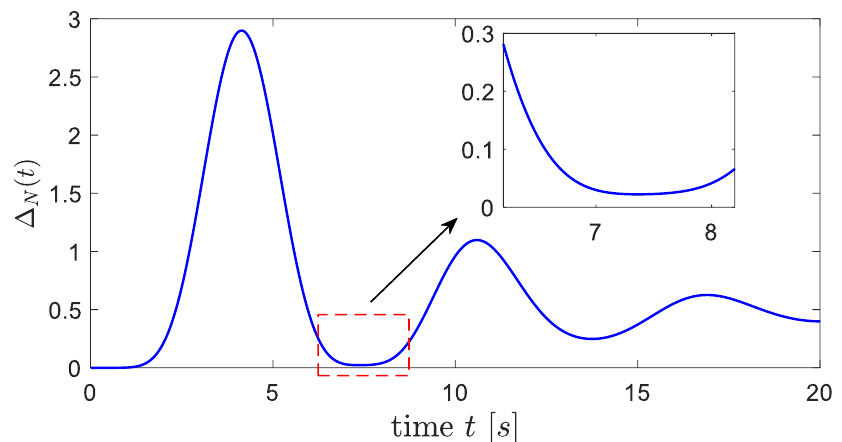}
    \caption{Evolution of the mixing regressor $\Delta_N(t)$ in the PE case}
    \label{fig:1}
\end{figure}

\begin{figure}[h]
    \centering
    \includegraphics[width=0.5\textwidth]{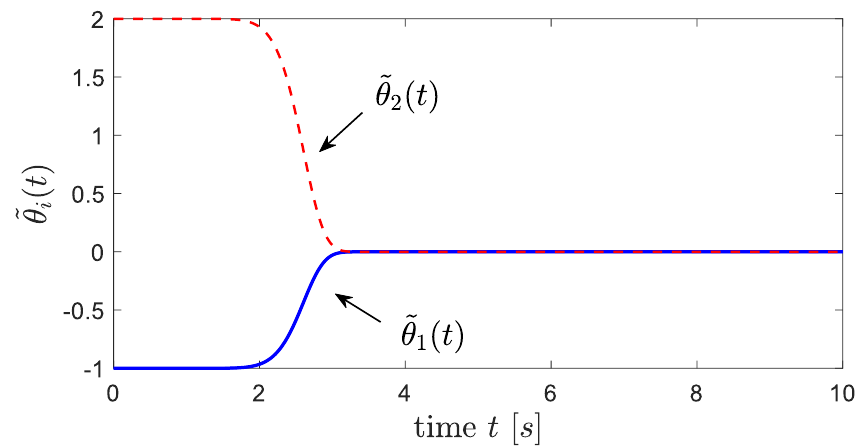}
    \caption{Trajectories of the estimation errors in the PE case with adaptation gains $\gamma_i=2$ ($i=1,2,3$)}
    \label{fig:2}
\end{figure}

Second, we consider the case when $\phi$ is $(t_0,t_c,\mu)$-IE considering the regressor
$$
\phi(t) = \left\{
\begin{aligned}
&\begmat{5\sin(t) \\ 8\cos(t)}, & \quad  t\in [0,5)\\
& \begmat{0 \\ 0},  & t \geq 5.
\end{aligned}
\right.
$$
The behavior of $\Delta_N(t)$ is shown in Fig. \ref{fig:3}. As predicted by the theory, there exists a time $t_\star$ such that $\Delta_N(t)>0$ for all $t \in [t_\star,\infty)$. On the other hand, since the excitation vanishes after $t=5$,  $\Delta_N(t)$ asymptotically converges to zero and thus the estimated parameters do not converge. Fig. \ref{fig:4} illustrates the trajectories of the DREM estimation errors $\tilde{\theta}_i$ with $\gamma_i=0.2$---as expected, taking larger $\gamma_i$ will reduce the steady-state error, see Fig. \ref{fig:5} obtained with $\gamma_i = 0.35$.

\begin{figure}[h]
    \centering
    \includegraphics[width=0.5\textwidth]{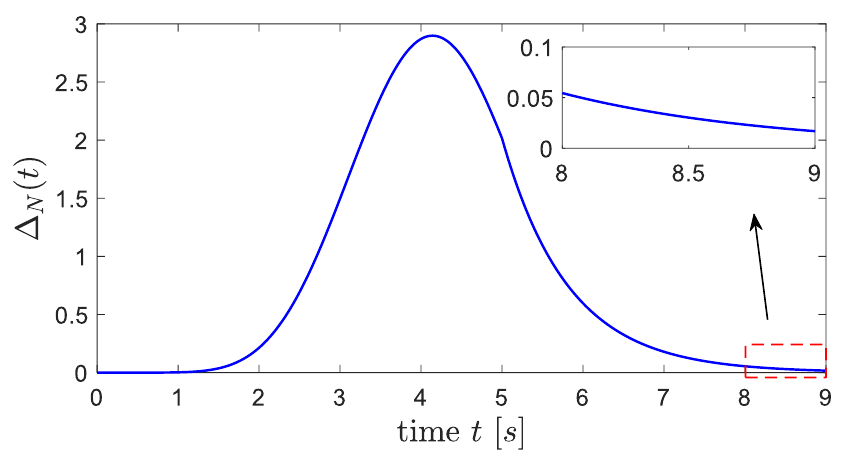}
    \caption{Evolution of the mixing regressor $\Delta_N(t)$ in the IE case}
    \label{fig:3}
\end{figure}

\begin{figure}[h]
    \centering
    \includegraphics[width=0.5\textwidth]{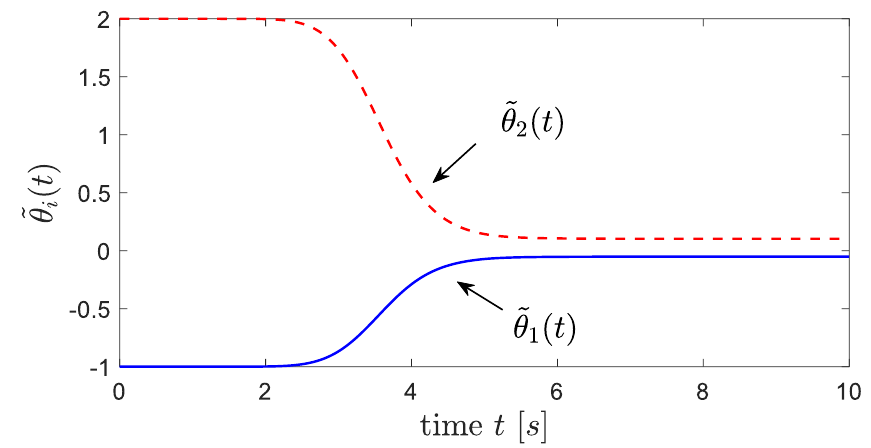}
    \caption{Trajectories of the estimation errors in the IE case with adaptation gains $\gamma_i=0.2$ ($i=1,2,3$)}
    \label{fig:4}
\end{figure}
\begin{figure}[h]
    \centering
    \includegraphics[width=0.5\textwidth]{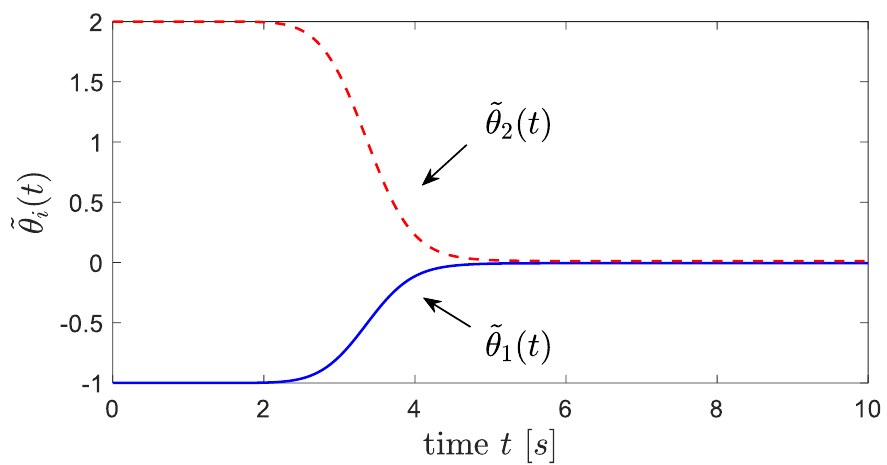}
    \caption{Trajectories of the estimation errors in the IE case with adaptation gains $\gamma_i=0.35$ ($i=1,2,3$)}
    \label{fig:5}
\end{figure}

\appendix
\begin{definition}\rm
\label{def:inj}
A (time-varying) function $T: \rea^n \times \rea_{\ge 0}  \to \rea^{m}$ is injective for fixed time $t$ if
$$
|x_a - x_b | \le \rho (|T(x_a,t) - T(x_b,t)|)
$$
for all $(x_a,x_b) \in \rea^n \times \rea^n$ and $t\ge 0$, where $\rho$ is a concave $\mathcal{K}^\infty$ function.
\end{definition}

\begin{lemma}\rm
\label{lem1}
\cite[Theorem 3]{BERAND} Consider the nonlinear time-varying system
\begequ
\lab{nlsys}
\dot{x}  = f(x,t), \quad y=h(x,t)
\endequ
with $x\in \rea^{n}$ and $y\in \rea$. Assume it is forward complete and backward-distinguishable in a bounded interval $\cali$ with $t\ge t_u$ and $t_u>0$, {\em i.e.}, for any $t\in \cali$ with $t\ge t_u$ and any $(x_a,x_b)\in \rea^{2n}$
$$
y_{x_a}(s) = y_{x_b}(s), \quad \forall s \in [t-t_u,t] ~\Longrightarrow~ x_a=x_b,
$$
where $y_{x_0}(t)$ denotes the output $y(t)$ for a state trajectory starting at $x(0)=x_0$. Then, there exists a {\em zero-Lebesgue measure} set $\cals \subset \mathbb{C}^{n+1}$ such that for any $(\lambda_1,\ldots, \lambda_{n+1}) \in \mathbb{C}_{>0}^{n+1}\backslash \mathcal{S} $, the function $T(\cdot,t)$ with
\begin{equation}
\label{T}
T(x,t) = \int_{0}^t e^{-\Lambda(t-s)} {\bf 1}_{n+1} y_x(s)ds,
\end{equation}
is {\em injective} for any $t\in \cali$ with $t> t_u$, in which $\Lambda := \diag\{\lambda_1, \ldots, \lambda_{n+1}\}$ and ${\bf 1}_{n+1}$ is an $(n+1)$-dimensional vector of ones.

\qed
\end{lemma}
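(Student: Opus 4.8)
The plan is to prove injectivity of $T(\cdot,t)$ by reducing it to a point-separation property and then exploiting the analyticity of the filter action in the poles $\lambda_i$. First I would fix $t>t_u$ and take any pair $(x_a,x_b)\in\rea^{2n}$ with $x_a\neq x_b$. Since the system is forward complete, the output trajectories $y_{x_a},y_{x_b}$ are defined on all of $[0,t]$, and $T(x,t)$ depends on $x$ only through $y_x$ on $[0,t]$. Hence, with $w(s):=y_{x_a}(s)-y_{x_b}(s)$,
\begequ
\lab{Tdiff}
T(x_a,t)-T(x_b,t)=\int_0^t e^{-\Lambda(t-s)}{\bf 1}\,w(s)\,ds .
\endequ
By backward-distinguishability after $t_u$, $w$ is not identically zero on the window $[t-t_u,t]$. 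After the change of variable $\sigma=t-s$, the $i$-th component of \eqref{Tdiff} equals $g_w(\lambda_i)$, where
$$
g_w(\lambda):=\int_0^t e^{-\lambda(t-s)}w(s)\,ds=\int_0^{t}e^{-\lambda\sigma}\,w(t-\sigma)\,d\sigma .
$$
Thus $T(\cdot,t)$ fails to be injective at $(x_a,x_b)$ if and only if $g_w(\lambda_i)=0$ for all $i\in\{1,\ldots,n+1\}$.

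The second step is the per-pair analysis. The function $g_w$ is the finite-horizon Laplace transform of the compactly supported signal $\sigma\mapsto w(t-\sigma)$, so $\lambda\mapsto g_w(\lambda)$ extends to an entire function. Because $w\not\equiv0$ on $[t-t_u,t]$, uniqueness of the Laplace transform gives $g_w\not\equiv0$, whence its zero set $Z_w:=\{\lambda>0:g_w(\lambda)=0\}$ is discrete, hence countable and Lebesgue-null in $\rea$. Consequently, for this fixed pair the set of bad poles is the product $Z_w\times\cdots\times Z_w$ ($n+1$ factors), which is countable and therefore Lebesgue-null in $\rea^{n+1}$.

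The delicate---and decisive---step is to promote this per-pair conclusion into a single exceptional set $\cals\subset\rea^{n+1}$ of zero Lebesgue measure that separates \emph{all} pairs simultaneously. A naive union over the (at most $2n$-dimensional) family of pairs is insufficient, because an uncountable union of null sets need not be null; controlling it is exactly the content of the almost-sure injectivity result \cite[Lemma 3.2]{COR}, and it is here that the choice $\ell=n+1$---one filter more than the state dimension $n$---is essential. The mechanism I would follow is a Fubini reduction on the last pole $\lambda_{n+1}$ combined with the entire dependence of $g_w$ on it: fixing $\bar\lambda=(\lambda_1,\ldots,\lambda_n)$, the first $n$ equations confine the admissible normalized differences $w$ to a fixed set carrying at most $2n$ real parameters and independent of $\lambda_{n+1}$, on which $g_w(\cdot)$ is linear in $w$ and entire in the pole. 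Coron's quantitative estimate then shows that, for almost every $\bar\lambda$, only a null set of $\lambda_{n+1}$ can produce a nonzero admissible $w$ with $g_w(\lambda_{n+1})=0$, since otherwise the joint real-analytic structure would yield a single nonzero $w_0$ with $g_{w_0}\equiv0$, contradicting the Laplace-uniqueness established above. I expect this measure-theoretic bookkeeping to be the main obstacle: the difficulty is precisely that the offending pairs vary with the parameters, so the argument cannot proceed by a plain union and must instead exploit the real-analytic dependence on the poles and the one extra filter in the quantitative form of \cite{COR}.

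Finally I would address the uniformity in $t$. The exceptional set $\cals$ can be taken independent of $t$, since the time enters the construction only through the requirement that $w\not\equiv0$ on $[t-t_u,t]$---guaranteed by backward-distinguishability for every $t\ge t_u$---and through the joint analytic dependence of $g_w$ on $(\lambda,t)$, so the projection argument can be run once over the joint parameter set. Discarding the resulting zero-measure $\cals$, the map $T(\cdot,t)$ separates every pair $x_a\neq x_b$ for all $t>t_u$ and all $(\lambda_1,\ldots,\lambda_{n+1})\in\rea_{>0}^{n+1}\setminus\cals$, which is the claim.
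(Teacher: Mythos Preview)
The paper does not give its own proof of this lemma: it is quoted verbatim as a simplified version of \cite[Theorem~3]{BERAND}, stated in the appendix with a \qed\ and no argument. So there is nothing in the paper to compare your proposal against; what you are attempting is a reconstruction of the Bernard--Andrieu proof itself, which the authors deliberately outsource.

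Your outline follows that proof's architecture correctly: the reduction to the finite-horizon Laplace transform $g_w$, the observation that $g_w$ is entire and nonvanishing once backward-distinguishability forces $w\not\equiv 0$, and the identification of \cite[Lemma~3.2]{COR} as the tool that upgrades the per-pair null set to a single global $\cals$, with $\ell=n+1$ being the dimension count that makes Coron's argument bite. Those are exactly the ingredients Bernard and Andrieu use.

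The one place where your sketch is not yet a proof is the third step, and you flag this yourself. Your description of the Fubini reduction is somewhat garbled: saying the first $n$ equations ``confine the admissible normalized differences $w$ to a fixed set carrying at most $2n$ real parameters'' is backwards, since the $w$'s are already parametrized by $(x_a,x_b)\in\rea^{2n}$ before any constraint, and the point is rather that $n$ generic analytic constraints cut this down to an $n$-dimensional family on which the $(n{+}1)$-st constraint is generically nontrivial. Making this heuristic rigorous is precisely the content of Coron's lemma (a parametric transversality/Sard-type argument exploiting holomorphy in the $\lambda_i$), and it does not reduce to a plain Fubini computation. If you want a self-contained argument you would need to reproduce that lemma; otherwise, citing \cite[Theorem~3]{BERAND} as the paper does is the honest route.
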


The result presented in \cite[Theorem 3]{BERAND} pertains to systems with {\em inputs}. But, as explained in Footnote 2 of \cite{BERAND} it applies as well to nonautonomous systems of the form \eqref{nlsys}. Besides, the result in \cite[Theorem 3]{BERAND} requires the backward-distinguishable condition \emph{for all} $t\ge t_u$ rather than a bounded interval $\cali$, and thus provides the stronger conclusion of the injectivity of $T(\cdot,t)$ also for all $t\in [t_u,+\infty)$. Even though this modified statement is not a priori necessarily equivalent to Lemma \ref{lem1}, the above ``local'' version can be proved in the same way as \cite[Theorem 3]{BERAND} without additional arguments needed.

\section*{Acknowledgments}
The authors would like to express their gratitude to the three anonymous reviewers for their careful reading of our paper and for pointing out two important corrections to it. We are particularly grateful to Reviewer 1 who did an extraordinary job.

\end{document}